\newcommand{\sv}[1]{}%
\newcommand{\NP}[0]{\ensuremath{\mathrm{NP}}\xspace}
\newcommand{\keywords}[1]{\smallskip\begin{center}\textit{Keywords: #1}\end{center}\bigskip}
\newtheorem{lemma}{Lemma}[section]
\newtheorem{theorem}[lemma]{Theorem}
\newtheorem{proposition}[lemma]{Proposition}
\newcommand{\unkn}[0]{\cellcolor{blue!25}}
\newcommand{\ea}[0]{\cellcolor{green!25}}
\newcommand{\ha}[0]{\cellcolor{red!30}}
\newcommand{\current}[0]{\cellcolor{orange!30}}
\newcommand{\oldM}[0]{M^*}
\newcommand{\Gcore}[0]{\mathcal{\bar{G}}}
\title{%
  Constricting
 the Computational Complexity Gap \\of the \textsc{$4$-Coloring} Problem in $(P_t,C_3)$-free Graphs\thanks{
    T.~Masařík was supported by the Polish National Science Centre SONATA-17 grant number 2021/43/D/ST6/03312.
    J.~Masaříková was supported by the Polish National Science Centre Preludium research project no.\ 2022/45/N/ST6/04232.
    }
}
\author[1]{Justyna Jaworska}
\author[1]{Bartłomiej Kielak\thanks{Current affiliation: Leipzig University, Leipzig, Germany.}}
\author[3]{\\Tomáš Masařík}
\author[3]{Jana Masaříková}
\affil[1]{Jagiellonian University, Krakow, Poland}
\affil[ ]{\texttt{justynajoanna.jaworska@student.uj.edu.pl}}
\affil[ ]{\texttt{bkielak@alumni.uj.edu.pl}}
\affil[2]{University of Warsaw, Poland}%
\affil[ ]{\texttt{\{masarik,jnovotna\}@mimuw.edu.pl}}
\date{}
\begin{document}

\maketitle

\begin{abstract}
The $k$-\textsc{Coloring} problem on hereditary graph classes has been a deeply researched problem over the last decade.
A hereditary graph class is characterized by a (possibly infinite) list of minimal forbidden induced subgraphs.
We say that a graph is \emph{$(H_1,H_2,\ldots)$-free} if it does not contain any of $H_1,H_2,\ldots$ as induced subgraphs.
The complexity landscape of the problem remains unclear even when restricting to the case $k=4$ and classes defined by a~few forbidden induced subgraphs.
While the case of only one forbidden induced subgraph has been completely resolved lately, the complexity when considering two forbidden induced subgraphs still has a couple of unknown cases. 
In particular, \textsc{$4$-Coloring} on $(P_6,C_3)$-free graphs is polynomial while it is \NP-hard on $(P_{22},C_3)$-free graphs.

We provide a reduction showing \NP-completeness of \textsc{$4$-Coloring} on $(P_t,C_3)$-free graphs for $19\leq t\leq 21$, thus %
constricting the gap of cases whose complexity remains unknown.
Our proof includes a computer search ensuring that the graph family obtained through the reduction is indeed $P_{19}$-free.
\end{abstract}
  \keywords{4-coloring, Hereditary graphs, $(P_t,C_\ell)$-free graphs}

\section{Introduction}

Graph coloring is a well-established concept in both Graph Theory and Theoretical Computer Science. 
A \emph{$k$-coloring} of a graph $G=(V,E)$ is defined as a mapping $c:V\to \{1,\ldots,k\}$ which is \emph{proper}, i.e., it assigns distinct colors to vertices $u,v\in V$ if $uv\in E$. 

The \textsc{$k$-Coloring} problem asks whether a given graph $G$ admits a $k$-coloring. We also define the \textsc{$k$-Precoloring Extension} problem, in which, apart from $G$, a subset $W \subseteq V(G)$ with a \emph{precoloring} $c_W : W \rightarrow \{1,\dots,k\}$ is given, and the task is to determine whether there exsists a $k$-coloring of $G$ which agrees with $c_W$ on $W$. Lastly, an instance of the \textsc{List $k$-coloring} problem consists of a graph $G$ and a \emph{list assignment} $L$, which assigns to each $v\in V(G)$ a list of admissible colors $L(v) \subseteq \{1,\dots, k\}$.  In that case, the coloring function $c$, in addition to being proper, has to respect the lists, that is, $c(v)\in L(v)$ for every vertex~$v$.

Observe that \textsc{$k$-Coloring} can be viewed as a special case of \textsc{$k$-Precoloring
Extension} which is, in turn, a special case of \textsc{List $k$-Coloring}.
For any $k\ge 3$, \textsc{$k$-Coloring} and, consequently, both \textsc{$k$-Precoloring Extension} and \textsc{list $k$-coloring} are well-known to be \NP-complete~\cite{Karp72}.

As these coloring problems are hard, we turn our attention to graph classes for which efficient algorithms might exist.
A graph class is \emph{hereditary} if it is closed under vertex deletion.
It follows that a graph class  $\mathcal{G}$ is hereditary if and only if $\mathcal{G}$  can be characterized by a unique (though not necessarily finite) set $\mathcal H_{\mathcal G}$ of minimal forbidden induced subgraphs.
When $\{H\}=\mathcal H_{\mathcal G}$, or $\{H_1,H_2,\ldots\}=\mathcal H_{\mathcal G}$, we say that $G\in\mathcal G$ is \emph{$H$-free}, or \emph{$(H_1,H_2,\ldots)$-free}, respectively.
Of particular interest are hereditary graph classes where $\mathcal H_{\mathcal G}$ contains only one or only a very few elements.
For two graphs $H_1$ and $H_2$, we let $H_1+H_2$ denote their disjoint union, and we write $kH$ for the disjoint union of $k$ copies of a graph~$H$.
We let $P_t$ denote the path on $t$ vertices, and $C_\ell$ the cycle on $\ell$ vertices.

The \textsc{$k$-Coloring} problem, along with the \textsc{maximum independent set} problem, has played a pivotal role in algorithm design for specific hereditary graph classes.
A fast and intensive study of this area has led to the development of new tools and the understanding of the complex structure of certain graph classes in recent years.
In particular, great effort was invested into the analysis of classes with only one or a few forbidden graphs.

Focusing on the \textsc{$k$-Coloring} problem with only \emph{one} forbidden induced subgraph, classical results imply that for every $k\ge 3$, \textsc{$k$-Coloring} of $H$-free graphs is \NP-complete if $H$ contains a cycle~\cite{EHK98} or an induced claw~\cite{Holyer81,LevenGalil83}. The only graphs that are not covered by these results are \emph{linear forests}, i.e. disjoint unions of paths. We shall discuss these, distinguishing the cases $k=3$ and $k > 3$.

It would appear that $k=3$ is the easiest to reason about.
However, the complexity of \textsc{$3$-Coloring} is fully known only for linear forests on at most $7$ vertices~\cite{BCMSZ17,KMMNPS20} and some exceptional larger graphs, such as $H=P_6+rP_3$~\cite{CHSZ20}.
The smallest unsolved cases are $H=P_8$ and $H=2P_4$; see~\cite{2P4WG} for a summary of the recent developments.
The discussion above indicates that the \textsc{$3$-Coloring} problem is challenging with the current set of tools even when only a single graph is forbidden, despite a recent breakthrough showing a quasi-polynomial time algorithm for any fixed $t$~\cite{GL20,PPR20}.

In contrast to case $k=3$, the complexity classification for $k>3$ has been resolved almost completely. %
Most notably, the full classification is known for $P_t$-free graphs, as can be seen from the list of the following results.
The \textsc{$4$-Coloring} problem is \NP-complete for $H=P_7$~\cite{Huang16} while for $H=P_6$, the \textsc{$4$-Precoloring Extension} problem is polynomial~\cite{SCZp1,SCZp2} and the \textsc{List $4$-Coloring} problem is \NP-complete~\cite{GPS14_IC}.
For all larger fixed $k>4$, the classification is now complete for the list version: the recent dichotomy~\cite{List-k-coloring} shows that the \textsc{List $k$-Coloring} problem on $H$-free graphs is polynomial-time solvable if and only if $H$ is contained in $rP_3$ for some $r\ge 1$ or in $P_5+rP_1$ for some $r\ge1$; otherwise, it is \NP-hard.

Naturally, the situation is much more complex when a pair of forbidden subgraphs is considered.
There has been a great effort in the characterization of various interesting properties of such graphs; consult the detailed survey by Golovach, Johnson, Paulusma, and Song~\cite{GJPS16}.
A systematic approach to finding the classes where many problems, including coloring, are polynomial, has been made by classification of what classes have bounded clique-width~\cite{cw-survey}, or clique-width of atoms~\cite{atoms}, or even mim-width~\cite{mimw}.
All the above properties guarantee a polynomial-time algorithm for the \textsc{$k$-Coloring} problem (even with unbounded $k$ in the first two cases).
In what follows, we focus on the \textsc{$4$-Coloring} problem on classes with exactly two forbidden induced subgraphs.

\begin{table}[t]
\centering
\begin{tabular}{c|c|c|c|c|c}
& {\small $\ell=3$} 
& {\small \makecell{$\ell=4$\\\cite{GPS14_DAM}}} 
& {\small \makecell{$\ell\in\{5,6\}$\\\cite{HellHuang17}}} 
& {\small \makecell{$\ell=7$\\\cite{HellHuang17}}}
& {\small \makecell{$\ell\ge 8$\\\cite{HellHuang17}}} \\
\hline
{\small $t\le 6$\cite{SCZp1,SCZp2}}      
& {\ea P} & {\ea P} & {\ea P} & {\ea P} & {\ea P} \\
{\small $7\le t \le 8$}         
& \unkn ? & {\ea P} & {\ha \NP{}-c} & \unkn ? & {\ha \NP{}-c} \\
{\small $9\le t\le 18$}
& \unkn ? & {\ea P} & {\ha \NP{}-c} & {\ha \NP{}-c} & {\ha \NP{}-c} \\
{\small $19 \le t\le 21$}     
& {\current \NP{}-c} & {\ea P} & {\ha \NP{}-c} & {\ha \NP{}-c} & {\ha \NP{}-c} \\
{\small $t\ge 22$}     
& {\ha \NP{}-c~\cite{HJP15}} & {\ea P} & {\ha \NP{}-c} & {\ha \NP{}-c} & {\ha \NP{}-c} \\
\end{tabular}
\caption{Summary of complexity results for {\sc $4$-Coloring} (or even \textsc{$4$-Precoloring Extension}) on $(P_t,C_\ell)$-free graphs. The result of this paper is highlighted with an orange background, other hardness results are highlighted in red, polynomial results are green, and unknown is blue.}
\label{tab:ptcl}
\end{table}

\paragraph{Results Overview for $(P_t,C_\ell)$-free Graphs.}
There was a particular attention on the classification of the complexity of the \textsc{$4$-Coloring} problem on $(P_t,C_\ell)$-free graphs.
We summarize the known results in \cref{tab:ptcl}.
Somewhat surprisingly, when $\ell=4$ and $t$ is arbitrary but fixed, the \textsc{$4$-Coloring} (in fact, any \textsc{$k$-Coloring}) is polynomial~\cite{GPS14_DAM}.
When $\ell \geq 5$, the classification is almost fully known aside from two unresolved cases ($\ell=7, t\in\{7,8\}$), \textsc{$4$-Coloring} is \NP-complete whenever $t \geq 7$ \cite{HellHuang17}. For $\ell = 3$, though, the situation is more involved. 

It is known that \textsc{$4$-Precoloring Extension} is polynomial-time solvable in the class of $(P_6, C_3)$-free graphs \cite{SCZp1, SCZp2}. In $2014$, Huang, Johnson and Paulusma \cite{HJP15} proved that \textsc{$4$-Coloring} $(P_{22}, C_3)$-free graphs is \NP-complete, thus improving the previous bound $t\ge 164$ \cite{GPS11}.

All the results above translate to the \textsc{$4$-Precoloring Extension} problem.
Notably, a bit less is known about the \textsc{List-$4$-Coloring} problem.
The polynomial results for \textsc{$4$-Coloring} do not translate, and, for example, for ($P_6,C_\ell$)-free graphs, the list version was shown to be \NP-complete~\cite{HJP15}.

Results in \cref{tab:ptcl} leave open only the following cases: $(P_7,C_7)$-, $(P_8,C_7)$-, and $(P_t,C_3)$-free graphs, for $7\le t\le 21$.
In this paper, 
we settle three open cases with $\ell=3$. 

\begin{theorem}\label{thm:main}
  The \textsc{$4$-Coloring} problem is \NP-complete in the class of $(P_t,C_3)$-free graphs for ${t\ge 19}$.
\end{theorem}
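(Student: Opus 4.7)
The plan is a polynomial-time reduction from a suitable \NP-complete variant of SAT (for example, \textsc{Not-All-Equal $3$-SAT} with positive literals, which is the typical starting point for triangle-free colouring reductions) to \textsc{$4$-Coloring} on $(P_{19},C_3)$-free graphs. Membership in \NP is trivial, so the whole argument is concentrated in the hardness direction. I would take as a blueprint the reduction of Huang, Johnson, and Paulusma \cite{HJP15} that establishes the result for $(P_{22},C_3)$-free graphs, and try to compress its gadgets so that induced paths become strictly shorter while preserving both the logical semantics and triangle-freeness.

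Concretely, I would first design a truth-value gadget whose admissible $4$-colourings faithfully encode a Boolean assignment (typically two of the four colours play the role of \emph{true}/\emph{false}, while the remaining two colours are used to force structure and break symmetry). Next I would design a clause gadget that is satisfiable in $4$ colours if and only if at least one incident literal is \emph{true} (or the appropriate NAE condition holds), again making sure every three-vertex subset spans at most two edges. Finally I would define how variables are plugged into clauses via short triangle-free connectors, reusing vertices and shortening the "propagation'' chains as much as possible compared to \cite{HJP15}, since each unit saved in the diameter of a connector translates into a unit saved in the worst-case induced path length of the whole construction.

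The correctness of the reduction should then be straightforward: one direction produces an explicit $4$-colouring from a satisfying assignment by colouring each variable gadget according to its truth value and extending greedily through the connectors and clause gadgets; the other direction reads off a satisfying assignment from any proper $4$-colouring by interpreting the colours appearing in the ``variable interface'' vertices. Triangle-freeness can be checked gadget by gadget by inspection, and the inter-gadget edges can be restricted by construction to pairs whose common neighbourhood in each gadget is empty.

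The main obstacle is proving $P_{19}$-freeness. A long induced path can in principle zig-zag through several gadgets in ways that are hard to rule out by hand, and this is exactly where the gain from $22$ down to $19$ is most delicate. My plan is to exploit the fact that any induced path eventually has to leave and re-enter gadgets only through their bounded-size interfaces: it therefore suffices to check induced-path lengths in the \emph{finite} family of subgraphs obtained by taking a constant-sized union of adjacent gadgets (say, one clause together with the variable gadgets incident to it, possibly augmented by one additional clause sharing a variable). Each such configuration is a small fixed graph, and one can run an exhaustive longest-induced-path search by computer over the full orbit of these configurations under the symmetries of the reduction. If this certificate search reports no induced $P_{19}$ on any minimal piece, then, because any longer induced path in the whole graph must restrict to such a piece, the entire construction is $P_{19}$-free and \cref{thm:main} follows for $t = 19$; the cases $t=20,21$ are then immediate since $(P_{19},C_3)$-free graphs are in particular $(P_t,C_3)$-free for $t\ge 19$.
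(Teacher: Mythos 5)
Your high-level plan matches the paper's strategy in outline: reduce from \textsc{MNAE-3-SAT}, take the construction of Huang, Johnson, and Paulusma as a starting point, verify triangle-freeness by inspection, and use a computer search for the final $P_{19}$-freeness check. However, two of the steps that carry the real weight of the argument are missing or wrong in your sketch.

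First, you do not identify the actual source of the savings from $22$ down to $19$. The key idea in the paper is a modified color synchronization gadget: delete one vertex from the fifth Mycielski graph $M_5$ so that the resulting $M'$ is still triangle-free, is $4$-colorable, and contains \emph{two} independent quadruples $I=(t_0,\ldots,t_3)$ and $I'=(t_0',\ldots,t_3')$ such that $t_it_j'\in E$ iff $i\ne j$ and both quadruples receive four distinct colors in every $4$-coloring. Having two ``ports'' lets one attach the clause and variable vertices directly to $M'$ and thereby eliminate the pendant vertices $W^4$ of the original reduction, which are precisely what lengthens induced paths there. Without this structural modification, merely ``compressing gadgets'' does not by itself get you under $22$; the original paper already remarks that using the single-port version one can realize a $P_{21}$.

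Second, and more seriously, your reduction of $P_{19}$-freeness to a finite computer check is not sound as stated. You propose to enumerate ``a constant-sized union of adjacent gadgets (say, one clause together with the variable gadgets incident to it, possibly augmented by one additional clause sharing a variable).'' That ignores that $M'$ is a \emph{global} structure to which every clause gadget and every variable vertex is attached, so an induced path can leave a clause gadget, enter $M'$, exit through a different port into a clause gadget that shares no variable with the first one, re-enter $M'$, and so on. It is simply false that a long induced path ``must restrict to such a piece.'' The paper has to work for this: it first proves by hand that any induced path contains at most three $x$-type vertices (and is short if it contains three), and that it has at most four components inside $M'$ (since any five vertices of $I\cup I'$ span an edge), which bounds the number of gadgets the path can touch to about five. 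Only then does it define a small explicit family of graphs, each containing the \emph{entire} $M'$ together with up to five gadgets in all relevant configurations, and proves that every induced path with at most two $x$-type vertices can be injectively re-embedded (``realized'') into one of them. The computer search is run on that specific family, not on local unions of a clause with its neighbors. Your proposal would either report a false certificate of $P_{19}$-freeness or would not terminate on a well-defined finite set at all.

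So the proposal is on the right track at the level of slogans but lacks the two concrete ideas that make the argument go through: the two-port modified Mycielski gadget, and the structural lemmas (bounding $x$-type vertices and $M$-components) that justify replacing the infinite family $\{G(\phi)\}$ by a fixed finite family of core graphs for the exhaustive search.
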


\subsection{Preliminaries}\label{sec:prelim}
\paragraph{Mycielski Construction.}
A classical way to increase the chromatic number of a graph without creating larger cliques is the following \emph{Mycielski construction}~\cite{Mycielski} dating back to 1955. 
Given a graph $G=(V,E)$ with $V=\{v_{0},\dots ,v_{n-1}\}$, the \emph{Mycielski graph} $\mu(G)$ is obtained from $G$ in three steps:  
(i) add a \emph{shadow vertex} $v_{n+i}$ for every $v_{i}\in V$;  
(ii) for every edge $v_{i}v_{j}\in E$, insert the edges $v_{i}v_{n+j}$ and $v_{n+i}v_{j}$;  
(iii) add a new \emph{universal vertex} $v_{2n}$ and join it to all shadow vertices $v_{n},v_{n+1},\dots ,v_{2n-1}$.
The resulting graph has $2n+1$ vertices, contains no new triangles, and satisfies that the chromatic number of $\mu(G)$ is one larger than that of $G$ while the clique number stays the same.
Iterating the operation starting with $M_2\coloneqq K_2$ yields the family $M_{k}\coloneqq \mu^{\,k-2}(K_2)$ ($k\ge 2$), providing for each $k$ a triangle-free graph with chromatic number $k$. The graphs $M_k$ are moreover \emph{critical} -- any proper subgraph of $M_k$ is $(k-1)$-colorable.

\paragraph{Monotone Not-All-Equal-3-SAT problem.}
Given a finite set $X=\{x_{0},\dots ,x_{n}\}$ of Boolean variables and a family $\mathcal{S}\subseteq\binom{X}{3}$ of three-element clauses containing \emph{only positive literals}, the \emph{\textsc{MNAE}-$3$-\textsc{SAT}} problem asks whether there exists an assignment ${\sigma:X\to\{\top,\bot\}}$ such that, for every clause $\{x_i,x_j,x_k\}=S\in\mathcal{S}$, the triple $\{\sigma(x_i),\sigma(x_j),\sigma(x_k)\}$ is \emph{not-all-equal}. 
Equivalently, each clause contains at least one true and at least one false variable.
This problem is long-known to be \NP-complete~\cite{Schaefer78}.

\bigskip
Our reduction is a refinement of one presented in \cite[Theorem 7]{HJP15}, which we first recall.

\subsection{The Original Construction in \cite[Theorem 7]{HJP15}}\label{sec:oldConstruction}

Below, we outline the reduction from \cite{HJP15} which proves \NP-completeness of \textsc{4-Coloring} for $(P_{22}, C_3)$-free graphs. The key ideas are as follows:
\begin{itemize}
    \item Build an instance $J_\phi'$ of \textsc{List $4$-Coloring} corresponding to an input \textsc{MNAE-$3$-SAT} instance $\phi$;
    \item Convert it into an equivalent \textsc{$4$-Coloring} instance $J_\phi^*$ by enforcing color lists with the help of a special \emph{color synchronization gadget}.
\end{itemize}

Let $\phi=(X,\mathcal{S})$ be an instance of \textsc{MNAE}‑3-\textsc{SAT}, where
$X=\{x_{0},\dots,x_{n}\}$ and $\mathcal{S}=\{S_{0},\dots,S_{m}\}$ with $S_{j}=\{x_{j_{0}},x_{j_{1}},x_{j_{2}}\}$.
First, we construct the graph $J_\phi$.
For every clause $S_{j}$, the graph $J_{\phi}$ contains two components $C_{j}$ and $C'_{j}$, each isomorphic to $P_{5}$.  
Writing the vertices of $C_{j}$ in order along the path gives  
\[
a_{j,0}\!-\!b_{j,0}\!-\!a_{j,1}\!-\!b_{j,1}\!-\!a_{j,2},
\]
and similarly for $C'_{j}$,
\[
a'_{j,0}\!-\!b'_{j,0}\!-\!a'_{j,1}\!-\!b'_{j,1}\!-\!a'_{j,2}.
\]
The list assignment $L$ is:
\[
\begin{aligned}
L(a_{j,0}) &= \{1,3\}, & L(b_{j,0}) &= \{2,3\}, &
L(a_{j,1}) &= \{1,2,3\}, & L(b_{j,1}) &= \{2,3\}, &
L(a_{j,2}) &= \{1,2\},\\
L(a'_{j,0}) &= \{0,3\},& L(b'_{j,0}) &= \{2,3\},&
L(a'_{j,1}) &= \{0,2,3\},& L(b'_{j,1}) &= \{2,3\},&
L(a'_{j,2}) &= \{0,2\}.
\end{aligned}
\]

For every variable $x_{i}\in X$, $J_{\phi}$ contains a single \emph{$x$‑type} vertex $x_{i}$ with list $L(x_{i})=\{0,1\}$.
Adjacencies between clauses and variables are defined as follows.
\begin{itemize}%
  \item[(i)] %
        For each $h\in\{0,1,2\}$, add edges $a_{j,h}x_{j_{h}}$ and $a'_{j,h}x_{j_{h}}$.
  \item[(ii)] Moreover, every $x$‑type vertex $x_{i}$ is adjacent to every $b$‑type vertex, i.e., to all vertices in $\{b_{j,0},b_{j,1},b'_{j,0},b'_{j,1}\mid 0\le j\le m\}$.
\end{itemize}

Furthermore, every edge joining an $a$-type with an $x$-type vertex is subdivided; declare each subdivision vertex to be of \emph{$c$-type} and give it the list $\{0,1\}$.  
Denote the resulting list‑colored graph by $J'_{\phi}$ and its list by $L'$. 
It is straightforward to verify that $J'_\phi$ is a positive instance of \textsc{List $4$-Coloring} if and only if $\phi$ is a positive instance of \textsc{MNAE-$3$-SAT}.

Next, we describe further adjustments that eliminate the lists in the construction.
For every vertex $u\in V(J'_{\phi})$ and every color $\gamma\notin L'(u)$, we attach a new pendant vertex $w_{u,\gamma}$ adjacent only to $u$ and set the list of $w_{u,\gamma}$ to $\{\gamma\}$.
This allows us to discard the lists of other vertices and thus, we formed an instance of \textsc{$4$-Precoloring Extension}.
Let $W^{4}$ be the set of all such pendant vertices. %

Let $M_{5}$ be the fifth Mycielski graph, which will play the role of a universal \emph{color synchroniza\-tion gadget}; see \cref{sec:prelim} for basics about this well-known graph construction. 
As $M_5$ is not $4$-colorable, one specific edge is removed.
We denote the modified graph $\oldM$.
$\oldM$ is then $4$-colorable.
Moreover, there exist four mutually non-adjacent vertices $t_{0}, t_{1},t_{2},t_{3} \in V(\oldM)$ such that in each 4-coloring they always receive distinct colors.
Observe that without the triangle-freeness requirement, we could use $K_4$ in place of $\oldM$.

Finally, we obtain $J^{*}_{\phi}$ from $J'_{\phi}$ by the following operations:
\begin{enumerate}
  \item delete every pendant vertex adjacent to a vertex in $B\cup C$;
  \item add a disjoint copy of $\oldM$ and keep the distinguished vertices $t_{0},t_{1},t_{2},t_{3}$;
  \item for every remaining pendant $v\in W^{4}$ with prescribed color $i\in\{0,1,2,3\}$, join $v$ to all $t_{j}$ with $j\neq i$;
  \item connect every vertex of $B$ to $t_{0}$ and $t_{1}$;
  \item connect every vertex of $C$ to $t_{2}$ and $t_{3}$.
\end{enumerate}

\begin{figure}
\centering
    \includegraphics[width=0.9\textwidth]{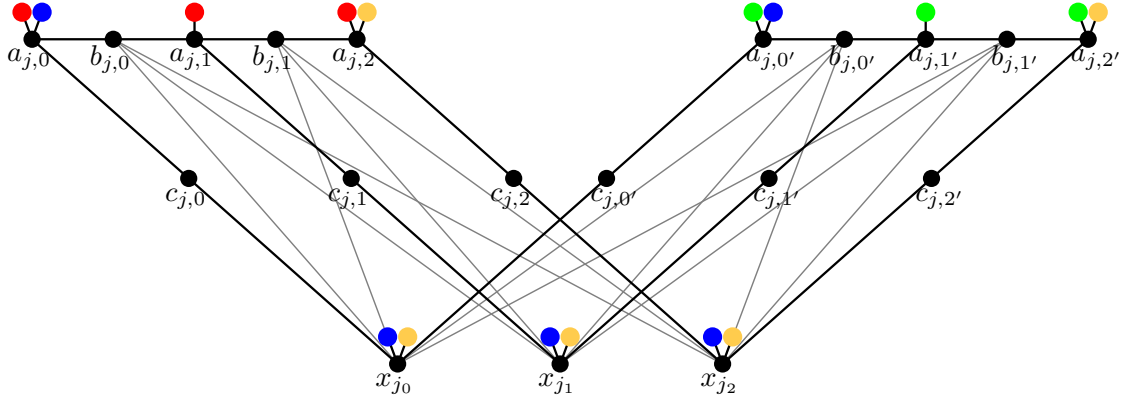}
\caption{A gadget corresponding to a clause $S_j = (x_{j_0}, x_{j_1}, x_{j_2})$ in the construction $J^{*}_{\phi}$ from \cite{HJP15}. Each colored vertex is adjacent to an appropriate subset of $\{t_0, t_1, t_2, t_3\}$, and so are all $b$-type and $c$-type vertices.}
\label{pic:old_gadget_2015}
\end{figure}

The final graph $J^{*}_{\phi}$ is $(C_{3},P_{22})$-free. Moreover, $\phi$ is satisfiable if and only if $J^{*}_{\phi}$ is $4$‑colorable. Figure \ref{pic:old_gadget_2015} shows a gadget that corresponds to a single clause.

Let us finally remark that \cite{HJP15} gives a construction of $P_{21}$ using only 2 vertices of $\oldM$, which disincentivizes improvements of the graph $\oldM$ in their construction.
In this paper, we improve the construction above; in particular, we utilize the color synchronization gadget more.

\section{Proof of Theorem~\ref{thm:main}}

We modify the construction from \cref{sec:oldConstruction}.

\subsection{New Construction Description}\label{sec:newConstruction}

We start with an instance of \textsc{MNAE}‑3-\textsc{SAT} $\phi=(X,\mathcal{S})$.

\paragraph{New Color Synchronization Gadget.}
First, let us discuss the modification of the Mycielski color synchronization gadget.
Let $M_5$ be Mycielski graph with chromatic number 5 on 23 vertices.
If $V(M_5) = \{u_0, \ldots, u_{22}\}$, then we remove the vertex $u_{16}$; let $M'$ denote the obtained graph and rename the vertices to $v_0, v_1, \dots, v_{21}$ (we shift the numeration of vertices so that the indices are from $0$ to $21$). Then, we are interested in the following quadruples of vertices --- $I := (t_0, t_1, t_2, t_3) = (v_{1}, v_{3}, v_{10}, v_{21})$ and $I' := (t_0', t_1', t_2', t_3') = (v_{19}, v_{17}, v_{11}, v_{5})$; see Figure \ref{pic:mycielski}. They have the following properties:
\begin{itemize}
\item both $I$ and $I'$ are independent sets;
\item $t_it_j'$ is an edge if and only if $i \neq j$;
\item in every 4-coloring of $M'$, both $I$ and $I'$ receive 4 different colors. \\ This holds, because $I = (v_{1}, v_{3}, v_{10}, v_{21})$ was the neighborhood of the deleted vertex in our initial $M_5$. If for some $4$-coloring of $M'$, $I$ spanned less than $4$ colors, one could extend the $4$-coloring of $M'$ to the entire $M_5$, contradiction. The coloring of $I'$ is forced by adjacencies in $M'$. 
\end{itemize}

\begin{figure}
\centering
    \includegraphics[width=0.6\textwidth]{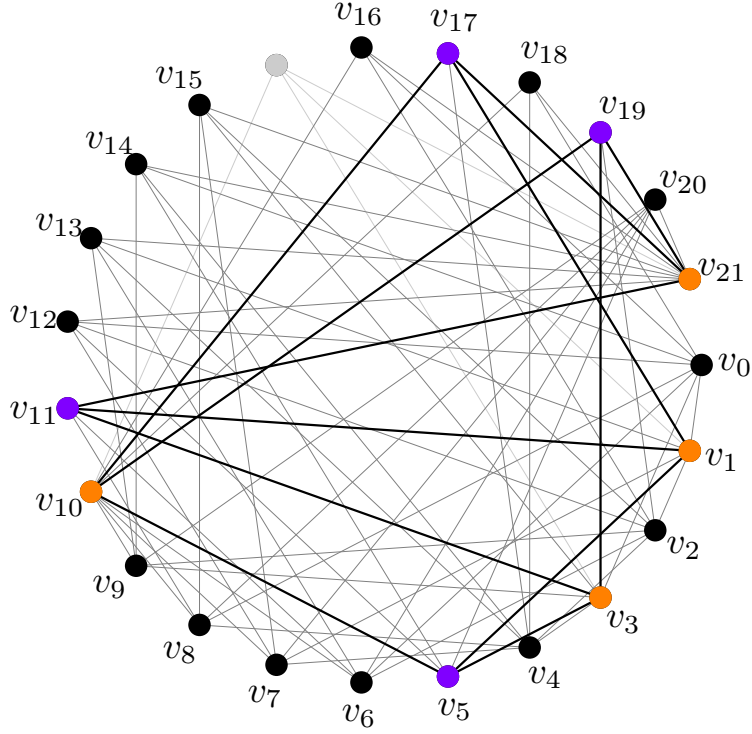}
\caption{Graph $M'$, obtained from Mycielski graph $M_5$ by removing the grayed out vertex. Distinguished colored vertices belong to $I$ and $I'$.}
\label{pic:mycielski}
\end{figure}

Thanks to changes in the color synchronization gadget, in particular, thanks to having two sets of connectors, we do not need to use $W^4$ pendants to enforce the right lists, but we can connect them directly.
Therefore, the clause gadget is defined as follows:

\paragraph{Clause Gadget.}
Define $H$ as a graph on 8 vertices $a_0, b_0, a_1, b_1, a_2, c_0, c_1, c_2$ with edge set $\{a_0b_0, b_0a_1, a_1b_1, b_1a_2, a_0c_0, a_1c_1, a_2c_2\}$. 
We say that $a_0, a_1, a_2$ are of \emph{$a$-type}, $b_0, b_1$ are of \emph{$b$-type}, and $c_0, c_1, c_2$ are of \emph{$c$-type}.
If we remove $b$-type vertices from $H$, it would collapse into three connected components, each consisting of an $a$-type vertex and a $c$-type vertex; we will call such a~component an \emph{$ac$-pair}.

\paragraph{Connecting the Gadgets.}

Now, let $\phi$ be a~formula consisting of clauses $S_0, \ldots, S_m$  (each clause consisting of three variables), and let $x_0, \ldots, x_n$ be all variables that appear in $\phi$.
We will construct a~graph $G = G(\phi)$ as follows: First, we take $M'$ and an independent set of $n$ vertices representing the variables, $x_0, \ldots, x_n$, which will be referred to as vertices of \emph{$x$-type}. 
We add an edge $x_it_2$ and $x_it_3$ for every $1 \leq i \leq n$. 
For each clause $S_j$, we will attach to the construction two copies of $H$, $H_0$ and $H_1$, in the following way. 
If $S_j = (x_{j_0}, x_{j_1}, x_{j_2})$, then we add an edge between $x_{j_h}$ and $c_h \in H_\varepsilon$ for any $h \in \{0,1,2\}$ and $\varepsilon \in \{0,1\}$.
Also, we add an edge $x_iw$ for every $x_i$ and every $b$-type vertex $w$. For every $b$-type vertex $v$, we add edges $vt_0'$ and $vt_1'$, and for every $c$-type vertex $w$, we add edges $wt_2'$ and $wt_3'$. 
Finally, we add edges $a_0t_0, a_0t_2, a_1t_0, a_2t_0, a_2t_3$ for vertices in $H_0$, and $a_0t_1, a_0t_2, a_1t_1, a_2t_1, a_2t_3$ for vertices in $H_1$; since there are six possible neighborhoods in $I$ of an $a$-type vertex, we distinguish six types of an $ac$-pair. Figure \ref{pic:gadgets} shows a~a gadget corresponding to a~single clause.

\begin{figure}
\centering
    \includegraphics[width=0.9\textwidth]{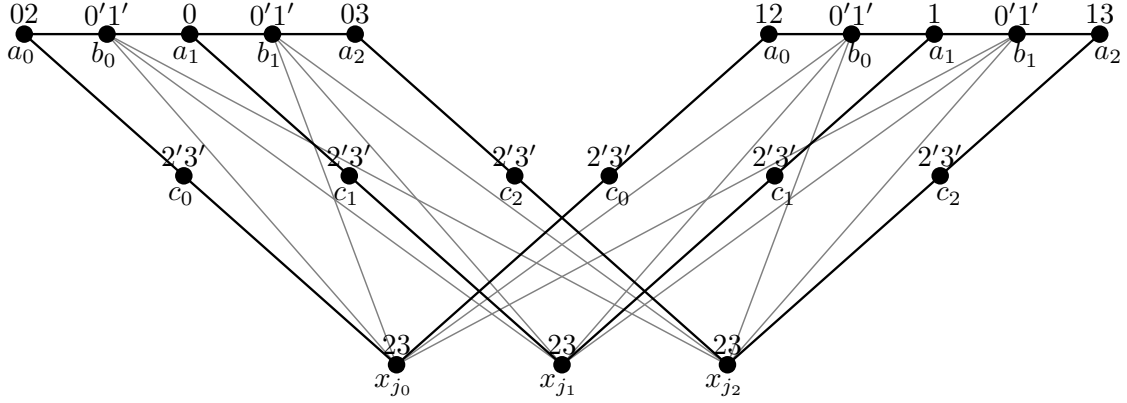}
\caption{A gadget corresponding to a clause $S_j = (x_{j_0}, x_{j_1}, x_{j_2})$. Labels on top of vertices indicate to which vertices from $I$ or $I'$ a given vertex is adjacent to, e.g.~a vertex with a label $0'1'$ is adjacent to $t_0'$ and $t_1'$. Note that $b$-type vertices are adjacent to all $x$-type vertices, including those not present in the gadget.}
\label{pic:gadgets}
\end{figure}

Denote
$$\mathcal{G} := \{G(\phi) : \phi \text{ is an instance of \textsc{MNAE-3-SAT}}\}.$$

\subsection{Correctness and Properties of the Construction}
\begin{proposition}[Triangle-freeness of the new construction in \cref{sec:newConstruction}]\label{prop:triangle-free}
  Any $G \in \mathcal{G}$ \cref{sec:newConstruction} is triangle-free.
\end{proposition}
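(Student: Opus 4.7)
The plan is to proceed by case analysis on how many vertices of a hypothetical triangle lie in $V(M')$. Partition $V(G)$ as $V(M') \cup X \cup A \cup B \cup C$ where $X, A, B, C$ collect the $x$-, $a$-, $b$-, and $c$-type vertices, respectively. The opening observation is that $M'$ itself is triangle-free: the Mycielski construction famously introduces no new triangles (see \cref{sec:prelim}), so $M_5$ is triangle-free, and $M'$ is obtained from $M_5$ by deleting a vertex. Hence no triangle has all three vertices in $V(M')$, and it remains to handle the cases of exactly two, exactly one, and zero triangle-vertices in $V(M')$.

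For the cases involving $V(M')$, the key calculation is to read off from the construction the $M'$-neighborhood of each non-$M'$ vertex:
\[
  N_{M'}(x) = \{t_2, t_3\},\quad
  N_{M'}(b) = \{t_0', t_1'\},\quad
  N_{M'}(c) = \{t_2', t_3'\},\quad
  N_{M'}(a) \subseteq I,
\]
where the last inclusion holds regardless of the type of $ac$-pair. Since $I$ and $I'$ are both independent sets in $M'$, no non-$M'$ vertex has two $M'$-neighbors that are adjacent in $M'$; this immediately rules out the case of ``exactly two triangle-vertices in $V(M')$''. For the case ``exactly one in $V(M')$'', I would enumerate the possible edges between non-$M'$ vertices --- namely $x$-$b$, $x$-$c$, $a$-$b$, and $a$-$c$ (the latter two living inside a single $H$) --- and observe that in each such pair the two $M'$-neighborhoods lie in $I$ and $I'$, respectively; since $I \cap I' = \emptyset$, the two endpoints have no common neighbor in $M'$.

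Finally, the case in which the triangle avoids $V(M')$ entirely reduces to a finite type-based enumeration over $\{x, a, b, c\}$. The useful structural facts are: $X$ is an independent set; each clause gadget $H$ is a tree (a $P_5$ with three pendant $c$-vertices), hence triangle-free; gadgets from different clauses share no direct edges; and the only edges between distinct types in $V(G) \setminus V(M')$ are $x$-$b$ (all pairs), $x$-$c$ (prescribed by clauses), $a$-$b$, and $a$-$c$ (both inside a single $H$). In particular, there are no $a$-$a$, $b$-$b$, $c$-$c$, $b$-$c$, or $x$-$a$ edges. Consequently, any three non-$M'$ vertices forming a triangle would have to realise one of these forbidden pairs, so this case evaporates as well. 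The only real obstacle is the bookkeeping: one must cover every type-triple carefully, but the long list of forbidden pairs makes nearly all configurations collapse immediately, leaving only a handful that are trivially verified.
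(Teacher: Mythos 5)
Your proof is correct and takes essentially the same approach as the paper's: a case split on how many triangle vertices lie in $M'$, using that $M'$ and the clause gadgets are triangle-free, that every non-$M'$ vertex has its $M'$-neighborhood entirely within $I$ or entirely within $I'$ (both independent), and that no vertex outside $M'$ attaches to both an $I$-vertex and an $I'$-vertex. You simply spell out the bookkeeping that the paper leaves as ``straightforward to verify.''
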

\begin{proof}
    Fix any $G \in \mathcal{G}$.
    Observe that $M'$ as well as all other gadgets are triangle-free and $M'$ connects to the rest of $G$ only using $I\cup I'$.
    Therefore, it is straightforward to verify that the rest of the construction is also triangle-free as no two adjacent vertices in $G-M'$ connects to the same vertex in $M'$ and no vertex of $G-M'$ connects to $tt'\in E(G)$ where $t\in I$ and $t'\in I'$.
\end{proof}

Since our construction is a relatively simple  modification of the one described in \cref{sec:oldConstruction}, the \NP-completeness proof is similar to that from \cite{HJP15}.
\begin{lemma}[Correctness of the new construction in \cref{sec:newConstruction}]\label{lem:correctness}
  \textsc{$4$-Coloring} is \NP-hard in the graph family $\mathcal{G}$.
\end{lemma}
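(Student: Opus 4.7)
The plan is to show that $\phi$ is NAE-satisfiable if and only if $G(\phi)$ is $4$-colorable. Since $G(\phi)$ has size polynomial in $|\phi|$ and is constructible in polynomial time, this yields the claimed \NP-hardness via a reduction from \textsc{MNAE-3-SAT}.

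The key idea is that the modified color synchronization gadget $M'$ forces a rigid color pattern on the interface sets $I$ and $I'$. By the properties of $M'$ established in \cref{sec:newConstruction}, any proper $4$-coloring $c$ assigns all four colors to both $I$ and $I'$, and because $t_it_j'\in E(G(\phi))$ exactly when $i\neq j$, we must have $c(t_i)=c(t_i')$ for every $i$. After relabeling colors we may assume $c(t_i)=c(t_i')=i$. Reading off the neighborhoods in $I\cup I'$ prescribed in \cref{sec:newConstruction} then induces a color list on every remaining vertex: $\{0,1\}$ on $x$- and $c$-type vertices, $\{2,3\}$ on $b$-type vertices, and $(\{1,3\},\{1,2,3\},\{1,2\})$ along the $a$-type triple of $H_0$, respectively $(\{0,3\},\{0,2,3\},\{0,2\})$ for $H_1$.

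Crucially, these induced lists are exactly the ones used on $J'_\phi$ in \cref{sec:oldConstruction}, so the correctness of our reduction reduces to the clause-gadget analysis carried out in \cite[Theorem~7]{HJP15}. For the forward direction, given an NAE-satisfying $\sigma$, I would fix a $4$-coloring of $M'$ realizing the pattern above, set $c(x_i)=1$ iff $\sigma(x_i)=\top$, color each $c_h$ with the color in $\{0,1\}$ complementary to $c(x_{j_h})$, and extend to the $a$- and $b$-type vertices using the list-colorings of the path $a_0b_0a_1b_1a_2$ that exist for all non-constant triples. For the backward direction, I would define $\sigma(x_i)=\top$ iff $c(x_i)=1$ and argue that $H_0$ cannot be colored when its three $c$-type vertices all take color $1$ (i.e.\ the clause's variables are all false): the induced lists then pin $a_0=3$ and $a_2=2$, after which a short case check on $b_0,a_1,b_1\in\{2,3\}$ with $a_1\neq 1$ yields no valid extension. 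By symmetry, $H_1$ rules out all-true clauses, so NAE holds on each clause.

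The main obstacle is the clause-gadget case analysis; although finite, it must be performed correctly for both variants $H_0$ and $H_1$. Since the induced lists match those of \cite{HJP15} exactly, I expect this to be a direct adaptation of their argument. The only bookkeeping specific to our construction is checking that the $b$- and $c$-type palettes are correctly inherited from $I'$ (rather than from the $W^{4}$ pendants of \cref{sec:oldConstruction}), which is immediate from the adjacencies we defined.
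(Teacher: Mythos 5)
Your proposal is correct and follows essentially the same route as the paper: both argue that $M'$ forces four distinct colors on $I$ and $I'$ with $c(t_i)=c(t_i')$, then observe that the induced color lists on the $a$-, $b$-, $c$-, and $x$-type vertices coincide exactly with the list assignment $L'$ on $J'_\phi$, and defer to the clause-gadget analysis of \cite[Theorem~7]{HJP15}. The only difference is that you additionally spell out the forward and backward directions of that clause-gadget analysis, which the paper treats as a black-box citation.
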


\begin{proof}
As observed in \cref{sec:newConstruction}, in every 4-coloring of $M'$ the sets $I=\{t_0,t_1,t_2,t_3\}$ and $I'=\{t_0',t_1',t_2',t_3'\}$ are independent and receive four pairwise distinct colors, and moreover each $t_i'$ is forced to have the same color as $t_i$ (this is easy to see since $t_i'$ sees $t_j$ for all $j\neq i$). 
We recall the list assignment $L$ and its subdivision version $L'$ used on $J'_\phi$ in \cref{sec:oldConstruction}. %
We claim that it is straightforward to check that under the fixed coloring of $M'$, every vertex in our new construction has exactly the same set of available colors as the corresponding vertex had in $J'_\phi$ under $L'$. 
Therefore, we can reuse the correctness argument from \cite[Theorem 7]{HJP15}: our two copies $H_0,H_1$ play the role of the two $P_5$ clause components $C_j,C'_j$; $a$-, $b$-, $c$-, and $x$-type vertices have exactly the same admissible colors; and the incidence between variables and the three $ac$-pairs in a clause is also identical.
\end{proof}

Our goal now is to reduce the number of graphs to check for the longest induced path. The proof will be computer-assisted, and 
we aim to show that every induced path must be contained in some~small specific substructure, which can be then examined by a computer program.
We provide the sources of our program with the arXiv submission\footnote{See file \texttt{sage\_script.ipynb}}. %

\begin{lemma} \label{lem:three_x_vertices}
        For any graph $G \in \mathcal{G}$, there are at most three $x$-type vertices in any induced path in G.
    Moreover, if an induced path in $G$ contains three $x$-type vertices, then it has at most $15$ vertices.
\end{lemma}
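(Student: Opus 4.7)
The plan is to classify the subpaths (which I call \emph{bridges}) between consecutive $x$-type vertices on an induced path $P$ in any $G \in \mathcal{G}$, and use the classification to bound both the number of $x$-types on $P$ and the total length of $P$ once three $x$-types are present.

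I would first record the relevant neighborhoods: any two distinct $x$-types share as common neighbors exactly $\{t_2, t_3\}$ together with all $b$-type vertices, each of which is in fact adjacent to \emph{every} $x$-type; the remaining neighbors of any $x_i$ are precisely the $c$-type vertices in clause gadgets where $x_i$ appears, and each such $c$-type has a unique $x$-type neighbor. It follows that a bridge with a single interior vertex $y_1$ has $y_1$ a common neighbor of all $x$-types, which would chord any third $x$-type on $P$. So if $P$ contains at least three $x$-types, every bridge has interior length at least $2$, and its first and last interior vertices are both $c$-types adjacent to the respective $x$-type endpoints.

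Next, I would analyze the inner shape of such a bridge. A $c$-type has only four neighbors (its $a$-partner, its unique $x$-type, and $t_2', t_3'$), so the second bridge vertex $y_2$ is either the $a$-partner of $y_1$ or one of $\{t_2', t_3'\}$. But $t_2', t_3'$ are each adjacent to every $c$-type in $G$, so using either forces the bridge to have interior length exactly $3$ and creates a chord with the $c$-type endpoint of any other bridge. Hence in the three-$x$-type case we must have $y_2$ equal to the $a$-partner of $y_1$, and symmetrically $y_{k-1}$ equal to the $a$-partner of $y_k$. The only admissible interior vertex between these two $a$-types is $t_0$ or $t_1$ (because $b$-types and $t_2, t_3$ would chord the other $x$-types, while $t_0', t_1'$ have no neighbors outside $M'$ apart from $b$-types). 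This yields the minimal bridge shape $c\!-\!a\!-\!t_i\!-\!a\!-\!c$ of $5$ interior vertices, possibly with a longer variant that exits through $t_0$ and re-enters through $t_1$ via an induced detour in $M'$. Since distinct bridges must use distinct vertices from $\{t_0, t_1\}$ and a bridge consuming both of them leaves nothing for a further bridge, at most two bridges can fit on $P$, hence at most three $x$-types.

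Finally, I would bound the length of $P$ in the three-$x$-type case. Both bridges must then take the minimal form and contribute $5$ interior vertices each (one using $t_0$, the other $t_1$); and $P$ can extend before the first $x$-type $x_1$ (symmetrically after the third $x_3$) by at most one further $c$-type vertex adjacent to $x_1$, since any continuation to its $a$-partner would chord with whichever of $t_0, t_1$ is already on $P$. Summing, $1 + 3 + 5 + 5 + 1 = 15$. To make the argument airtight against subtle $M'$-detours that the hand analysis might miss, I would delegate to the small-substructure computer search promised in the paper. The main obstacle I expect is precisely this final elimination: $M'$ is a nontrivially structured triangle-free graph on $22$ vertices, and ruling out every induced $t_0$-to-$t_1$ path inside $M'$ compatible with all chord constraints imposed by the rest of $P$ is best handled by the computer check.
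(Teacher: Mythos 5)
Your argument is correct and mirrors the paper's own proof: excluding $t_2,t_3,t_2',t_3'$ and all $b$-type vertices forces each segment between consecutive $x$-type vertices into the shape $c\!-\!a\!-\!(\text{path in }M')\!-\!a\!-\!c$ with both endpoints of the $M'$-part in $\{t_0,t_1\}$, and since a detour from $t_0$ to $t_1$ inside $M'$ consumes both connectors and forecloses any further bridge, there can be at most two bridges, both of the minimal $5$-interior-vertex form, giving at most three $x$-type vertices and at most $1+3+5+5+1=15$ vertices overall. Your closing worry about needing the computer search to rule out $M'$-detours is unfounded --- the observation you yourself made (a $t_0$-to-$t_1$ detour leaves no connector for a second bridge) already closes the argument by hand --- and indeed the paper proves this lemma without any computer assistance, reserving the computer check (\cref{lem:computer_check}) for the separate case of induced paths containing at most two $x$-type vertices.
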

\begin{proof}

Suppose that some induced path $P$ contains at least three $x$-type vertices. Then, vertices $t_2$ and $t_3$ and vertices of $b$-type cannot appear in $P$ since they are adjacent to all $x$-type vertices, hence $P$ must have the form $P_0 x_1 c^1 P_1 c^2 x_2 c^3 P_2 c^4 x_3 P_3$, where each $c^i$ is of $c$-type. Now it follows that $t_2'$ and $t_3'$ cannot appear in $P$, as they are adjacent to all $c$-type vertices, hence $P_1$ is of the form $a^1 P_1' a^2$ and $P_2$ is of the form $a^3 P_2' a^4$, where $a^i$ are of $a$-type. But now, both $P_1'$ and $P_2'$ must have endpoints in $\{t_0, t_1, t_2, t_3\}$ and no vertex can be used twice --- and since we already excluded the possibility of using $t_2$ or $t_3$, we must have $P_1' = t_0$ and $P_2' = t_1$ (or vice versa).

Observe that neither $P_0$ nor $P_3$ can contain any $a$-type vertex, as it would be adjacent to either $t_0$ or $t_1$, hence both $P_0$ and $P_3$ either are empty or consist of a~single $c$-type vertex only. In particular, any induced path in $G$ can have at most three $x$-type vertices, and every induced path with three $x$-type vertices is of order either 13, 14, or 15. \end{proof}

Suppose $P$ is an induced path in some $G \in \mathcal{G}$ and let $P_1, \ldots, P_k$ be all connected components of $P \cap M'$; we shall call such components the \emph{$M$-components} of $P$. Observe that if $k \geq 2$, then each $M$-component has a~neighbor in $G - M'$, hence contains at least one vertex from $I \cup I'$.
\begin{lemma}\label{lemma:M-components}
    For any $G \in \mathcal{G}$, every induced path $P$ in $G$ contains at most four $M$-components.
\end{lemma}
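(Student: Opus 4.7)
The plan is to argue by contradiction. Assume that $P$ has $k\ge 5$ $M$-components $P_1,\dots,P_k$, and let $Q_0,\dots,Q_k$ be the subpaths of $P$ in $G-M'$ (with $Q_0$ and $Q_k$ possibly empty and $Q_1,\dots,Q_{k-1}$ non-empty).

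First I apply a portal-counting argument. The only vertices of $M'$ with neighbors outside $M'$ lie in $I\cup I'$, so every endpoint of an $M$-component adjacent in $P$ to a non-empty $Q_i$ must lie in $I\cup I'$. Internal components contribute two such exit portals each, while $P_1$ (resp.\ $P_k$) contributes one if $Q_0$ (resp.\ $Q_k$) is empty, and two otherwise. Since the vertices of the induced path $P$ are distinct, the exit portals are pairwise distinct, so with $|I\cup I'|=8$ we obtain $k\le 5$; moreover $k=5$ forces $Q_0=Q_k=\emptyset$, all eight portals to be used exactly once as exits, and no portal to appear in the interior of any $M$-component.

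To rule out $k=5$ I would exploit global chord constraints. Each portal on $P$ is an $M$-component endpoint and therefore has exactly one external $P$-neighbor. Using that $t_2,t_3$ are adjacent to every $x$-vertex, $t_0',t_1'$ to every $b$-vertex, $t_2',t_3'$ to every $c$-vertex, and $t_0$ (resp.\ $t_1$) to every $a$-vertex of every $H_0$-copy (resp.\ $H_1$-copy), I would deduce that $P$ contains at most one $x$-vertex, one $b$-vertex, one $c$-vertex, and at most two $a$-vertices; moreover every such non-$M'$ vertex, if present, must sit between the two portals that dominate it in $P$. Since $G-M'$ is bipartite with parts $\{a,x\}$ and $\{b,c\}$, each of the four non-empty $Q_i$'s is a short induced alternating path, and the chord constraints above force each $Q_i$ to be a single vertex.

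A brief enumeration then shows there are exactly two feasible assignments of $Q_i$-pairs up to symmetry: two of the $Q_i$'s must realize $\{t_0',t_1'\}$ via a $b$-vertex and $\{t_2',t_3'\}$ via a $c$-vertex, while the remaining two are single-$a$-vertex connectors giving either the partition $\{t_0,t_2\}\cup\{t_1,t_3\}$ or $\{t_0,t_3\}\cup\{t_1,t_2\}$ of $\{t_0,t_1,t_2,t_3\}$; the alternative using an $x$-vertex to realize $\{t_2,t_3\}$ fails because $\{t_0,t_1\}$ admits no valid connector within the remaining budget. In each remaining configuration, the four $Q_i$-pairs together with the three resulting $P$-pairs (endpoints of $P_2,P_3,P_4$) must be realized by pairwise vertex-disjoint induced paths in $M'$ whose interiors avoid all other portals. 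A finite inspection of the explicit graph $M'$ from Figure~\ref{pic:mycielski}, using the symmetries inherited from the Mycielski construction, shows that no such simultaneous realization exists in either case, delivering the contradiction.

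The main obstacle is this last step: once the portal-counting, chord, and bipartiteness arguments cut the problem to a handful of configurations of portal-pairs, ruling out each one requires checking inside the small but irregular graph $M'$ that the prescribed system of vertex-disjoint internally-portal-avoiding induced paths cannot coexist. If a clean structural argument proves elusive, this finite check can be delegated to the companion computer verification used in the proof of the main theorem.
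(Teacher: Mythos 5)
Your portal-counting step is sound and gives $k\le 5$, but everything after that is where the proposal diverges from (and falls well short of) the paper's argument. The paper's proof is a two-line observation that you seem to have missed: each $M$-component contains at least one vertex of $I\cup I'$ (since its endpoints are the only vertices of $M'$ with neighbors outside $M'$), and because $P$ is an \emph{induced} path, any two vertices lying in distinct $M$-components are non-adjacent in $G$. Hence choosing one vertex of $I\cup I'$ from each of $k$ $M$-components yields a $k$-element \emph{independent set} in $G[I\cup I']$. But $G[I\cup I']$ is $K_{4,4}$ minus a perfect matching (edges $t_it_j'$ exactly when $i\ne j$), whose independence number is $4$; so $k\le 4$, done.

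Your plan, by contrast, tries to rule out $k=5$ by an explicit enumeration of which portal pairs the $Q_i$'s can realize and then by checking vertex-disjoint path systems inside the concrete graph $M'$. You acknowledge the last step is not carried out and propose to delegate it to the computer. That is a genuine gap: the proposal as written does not establish the lemma, and worse, the enumeration approach is unnecessary. Notably, your own setup already contains the needed ingredient — you observe that exit portals of distinct $M$-components are pairwise distinct, but you never invoke that they are also pairwise \emph{non-adjacent} (the no-chord condition for an induced path). Once you add that, picking one portal per $M$-component gives an independent set of size $k$ in $G[I\cup I']$, and $\alpha(G[I\cup I'])=4$ closes the argument immediately, with no case analysis and no reliance on the internal structure of $M'$ beyond the adjacency pattern on $I\cup I'$.
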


\begin{proof}
    Assume the contrary, and let $P_1, \ldots, P_5$ be $M$-components of $P$. By the observation above, each $P_i$ contains some vertex from $I \cup I'$. But any five vertices in $I \cup I'$ induce at least one edge and therefore some of the $M$-components would be connected by an edge, which is a~contradiction.  
\end{proof}

We will now analyze all possible induced paths that contain at most two $x$-type vertices. For this reason, we introduce the following constructions:
\begin{itemize}
\item Graphs $G_{0,i}$ (Figure \ref{pic:G_0_i}), for $i = 0, \ldots, 5$, consist of the Mycielski part $M'$ with $i$ copies of $H_0$ and $5-i$ copies of $H_1$ attached to it. %

\begin{figure}
\centering
    \includegraphics[width=0.9\textwidth]{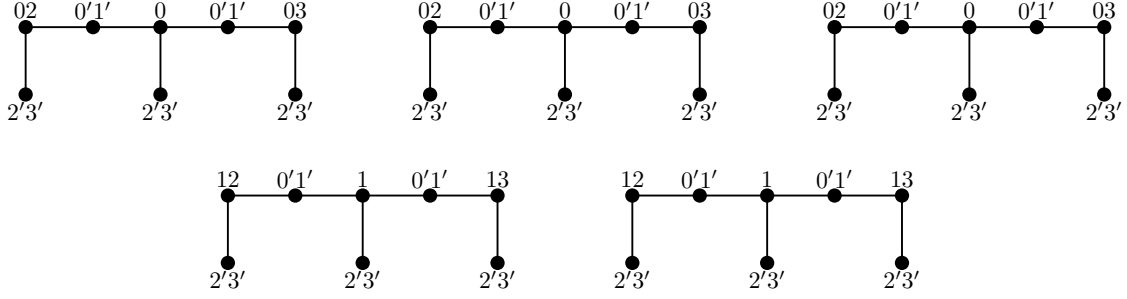}
\caption{Construction $G_{0,3}$ (without the Mycielski part $M'$). Labels at each vertex indicate its neighbors in $I \cup I'$.}
\label{pic:G_0_i}
\end{figure}

\item Graph $G_1$ (Figure \ref{pic:G_1}) is based on the Mycielski part $M'$ and a single $x$-type vertex $v$. We attach to $v$ six $ac$-pairs of all possible types. We also attach two gadgets $H_0$ and $H_1$ in a way that $x_0$ is adjacent only to $b$-type vertices of those gadgets. Finally, we attach to $M'$ additional four $H_0$-gadgets and four $H_1$-gadgets with removed $b$-type vertices. 

\begin{figure}
\centering
    \includegraphics[width=0.9\textwidth]{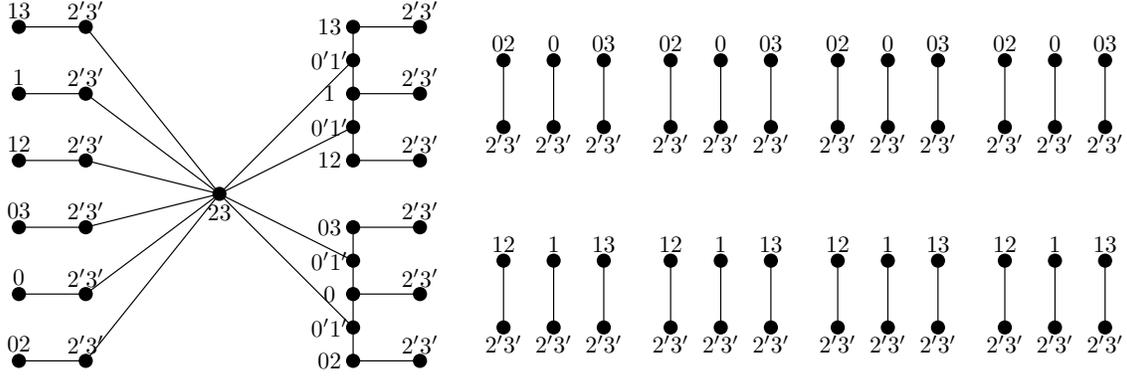}
\caption{Construction $G_1$ (without the Mycielski part $M'$). Labels at each vertex indicate its neighbors in $I \cup I'$.}
\label{pic:G_1}
\end{figure}

\item Graph $G_2$ (Figure \ref{pic:G_2}) is based on the Mycielski part $M'$ and a single $x$-type vertex $v$. We attach two $H_0$-gadgets and two $H_1$-gadgets to $M'$, with $v$ adjacent only to $b$-type vertices. Finally, we attach to $M'$ additional four $H_0$-gadgets and four $H_1$-gadgets with removed $b$-type vertices.

\begin{figure}
\centering
    \includegraphics[width=0.9\textwidth]{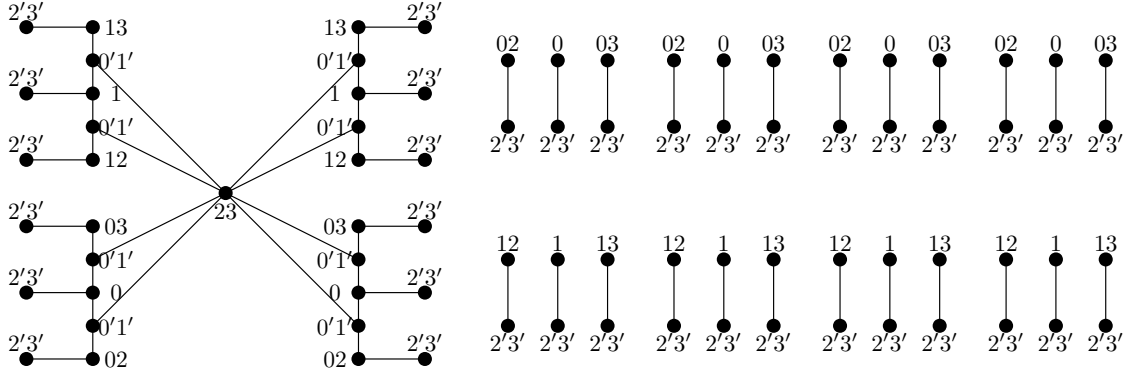}
\caption{Construction $G_2$ (without the Mycielski part $M'$). Labels at each vertex indicate its neighbors in $I \cup I'$.}
\label{pic:G_2}
\end{figure}

\item Graph $G_3$ (Figure \ref{pic:G_3}) is based on the Mycielski part $M'$ and two $x$-type vertices $v, v'$. To each $x$-type vertex we attach two $ac$-pairs of each type. Moreover, we introduce a single $b$-type vertex which is adjacent to both $v$ and $v'$. Finally, we attach to $M'$ four $H_0$-gadgets and four $H_1$-gadgets with removed $b$-type vertices. 

\begin{figure}
\centering
    \includegraphics[width=0.7\textwidth]{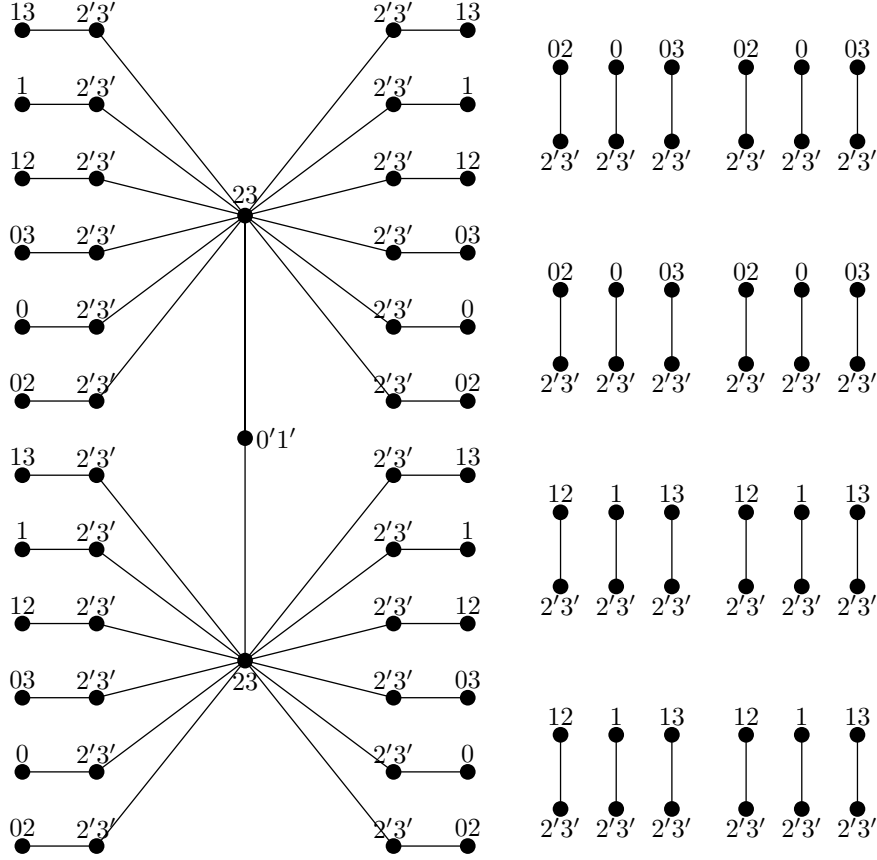}
\caption{Construction $G_3$ (without the Mycielski part $M'$). Labels at each vertex indicate its neighbors in $I \cup I'$.}
\label{pic:G_3}
\end{figure}
\end{itemize}

Let $\Gcore = \{G_{0,i} | i = 0, \ldots, 5\} \cup \{G_1, G_2, G_3\}$ be the set of all constructions defined above. If $P$ is an induced path in some $G \in \mathcal{G}$, then we will say that $P$ can be \emph{realized} in $G' \in \Gcore$ if there exists an injective homomorphism $P \to G'$ which is identity on the Mycielski part and preserves non-edges and types of vertices. 

\begin{lemma}\label{lem:computer_check}
Every $G \in \Gcore$ is $P_{19}$-free.
\end{lemma}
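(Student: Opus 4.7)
The plan is to verify the statement by a direct, computer-assisted check, since each $G \in \Gcore$ is an explicitly defined finite graph of modest size (at most a couple of hundred vertices). For each of the nine graphs $G_{0,0},\ldots,G_{0,5},G_1,G_2,G_3$ I would run an exhaustive enumeration of induced paths, pruning any branch as soon as no extension to length $19$ remains possible; the lemma follows if the enumeration never finds an induced $P_{19}$.

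To make the search tractable, I would exploit the structural facts already at hand. By \cref{lemma:M-components}, any induced path $P$ has at most four $M$-components, so the search can be organized in two phases: first choose the (at most four) pairwise non-adjacent subpaths of $P$ lying in $M'$, then extend through the gadget part using vertices of $a$-, $b$-, $c$-, or $x$-type. Since each gadget is small and its adjacencies to $I\cup I'$ are tightly prescribed, only a bounded number of extension patterns need to be enumerated. Substantial further savings come from symmetries: $H_0$ and $H_1$ differ only in whether their $a$-type vertices attach to $t_0$ or $t_1$, and the multiple copies of an $ac$-pair of a given type in $G_1$, $G_2$, $G_3$ are interchangeable, so the branching can be collapsed to essentially distinct extension types. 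A convenient alternative would be to invoke a standard longest-induced-path routine (for example from \textsf{SageMath}) directly on each $G \in \Gcore$ and read off that the answer is at most $18$; given the vertex counts involved, the built-in backtracking is already fast enough.

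The main obstacle I expect is implementation correctness rather than mathematics: the adjacency patterns between $M'$, the $x$-type vertices, and the several classes of $a$-, $b$-, $c$-type vertices in the attached (possibly $b$-deleted) gadgets are intricate, and a single indexing mistake could silently miss a long induced path or conversely produce a spurious one. To guard against this I would cross-validate two independent implementations of the graph construction, explicitly exhibit a longest induced path produced by the search for each $G \in \Gcore$ (so its length $\le 18$ and its induced-path property can be checked directly), and sanity-check by recomputing the longest induced path in small substructures such as a single clause gadget attached to $M'$.
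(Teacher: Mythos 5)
Your proposal is correct and takes essentially the same approach as the paper: the paper's proof of this lemma is a computer verification carried out in a SageMath notebook, just as you suggest (including the option of invoking a longest-induced-path routine directly on each $G \in \Gcore$). The extra pruning and cross-validation ideas you mention are sensible safeguards but not part of the paper's stated argument.
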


\begin{proof}
  Proof is carried out by computer verification. The sources are provided with the arXiv submission\footnote{See file \texttt{sage\_script.ipynb}}.
\end{proof}

Note that the following path of order 18 can be realized in any $G_{0,i}$: $ababa-t_3-ababac-t_3'-cabab$.

\begin{lemma}\label{lem:path_realization}
    For any $G \in \mathcal{G}$, let $P$ be an induced path in $G$ which contains at most two $x$-type vertices. Then, $P$ can be realized in some $G' \in \Gcore$.
\end{lemma}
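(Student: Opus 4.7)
My plan is to argue by case analysis on the number $k\in\{0,1,2\}$ of $x$-type vertices in $P$, relying on three structural observations: (O1) any maximal subpath of $P$ avoiding $M'$ and $x$-type vertices lies in a single gadget $H\in\{H_0,H_1\}$, since the gadgets are the connected components of $G-(M'\cup X)$ where $X$ denotes the $x$-type vertices; (O2) by \cref{lemma:M-components}, $P$ has at most four $M$-components; and (O3) whenever a class of vertices is complete to another class in $G$---$b$-type to $x$-type, $\{t_2,t_3\}$ to $x$-type, $\{t_0',t_1'\}$ to $b$-type, and $\{t_2',t_3'\}$ to $c$-type---any two vertices from such a pair that both lie on $P$ must be consecutive on $P$.

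For $k=0$, (O1)--(O2) will decompose $P$ into at most four $M$-components interleaved with at most five ``gadget subpaths'' lying in pairwise distinct gadgets. Letting $i$ count those subpaths that lie in $H_0$-gadgets, I will realize $P$ inside $G_{0,i}$ by the identity on $M'$ and a type-preserving injection of each gadget subpath into a distinct copy of the matching gadget in $G_{0,i}$. For $k=1$, with $v$ the unique $x$-type vertex, (O3) restricts every $b$-type vertex and every occurrence of $t_2$ or $t_3$ in $P$ to be one of the at most two $P$-neighbors of $v$. I will delete $v$, handle the two resulting subpaths as in the $k=0$ case, and then select the target: $G_1$ when $v$ has a $c$-type $P$-neighbor (so $v$'s own clause contributes a gadget, using one of the six $ac$-pair types attached to $v$ in $G_1$) and $G_2$ otherwise.

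For $k=2$, with the two $x$-type vertices $v$ and $v'$, I will apply (O3) to argue that any $b$-type vertex in $P$ must be a common $P$-neighbor of $v$ and $v'$; hence at most one $b$-type vertex appears in $P$, and if present it forms the sole interior vertex of the subpath $v$-$b$-$v'$. A similar argument shows that at most one of $t_2,t_3$ can appear in $P$, again between $v$ and $v'$. Apart from this connector, the subpaths before $v$, between $v$ and $v'$, and after $v'$ will each behave as in the $k=0$ case. The construction $G_3$ is tailored to this: it has two $x$-type vertices with two $ac$-pairs of each of the six types attached to each, a single shared $b$-type vertex, and four copies each of $H_0$ and of $H_1$ (with $b$-type removed) attached purely through $M'$.

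In every case the resulting realization is identity on $M'$, type-preserving, and maps distinct gadget subpaths of $P$ to distinct gadget copies in $G'$, so non-edges are preserved both within gadgets and across them. The main obstacle that I anticipate is the bookkeeping in the $k=2$ case: I will need to verify that no single $ac$-pair type is demanded more than twice by the neighbors of any one $x$-type vertex, and that the $H_0$/$H_1$ tallies across the gadget subpaths away from $v$ and $v'$ fit inside $G_3$'s four-plus-four budget. Both will ultimately follow from \cref{lemma:M-components} together with the fact that each $x$-type vertex has at most two $P$-neighbors.
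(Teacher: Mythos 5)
Your overall decomposition (split by number of $x$-type vertices, bound gadget subpaths via the $M$-component count, then inject into a member of $\Gcore$) matches the paper's strategy, but your target selection in the $k=1$ case has a genuine gap. You route to $G_1$ ``when $v$ has a $c$-type $P$-neighbor.'' This includes the subcase where $v$ has \emph{two} $c$-type $P$-neighbors (and no $b$-type neighbor), and those two $c$-type vertices can lie in $ac$-pairs of the \emph{same} type: if the variable $v=x_i$ occupies position $h$ in two different clauses $S_j,S_{j'}$, then $v$ is adjacent to $c_h$ of $H_0^j$ and to $c_h$ of $H_0^{j'}$, and both corresponding $a$-vertices have the same neighborhood in $I$. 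A path of the form $a\!-\!c\!-\!v\!-\!c'\!-\!a'$ with $a,a'$ of identical type is therefore realizable in $G$. Since a realization must be the identity on $M'$ and preserve non-edges, the two $a$-vertices must map to $a$-vertices of $G'$ with the same $I$-neighborhood, and their $c$-partners must both map to $c$-vertices adjacent to $v$. But $G_1$ attaches only \emph{one} $ac$-pair of each type to $v$, so no injective realization into $G_1$ exists for such a $P$.

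The paper avoids this by splitting the $k=1$ case on the number of $b$-type $P$-neighbors of $v$: with zero $b$-type neighbors it routes to $G_3$ (whose $x$-type vertices each carry \emph{two} $ac$-pairs of every type, precisely to absorb the same-type collision), with exactly one to $G_1$, and with two to $G_2$. Your condition ``has a $c$-type $P$-neighbor'' does not align with what distinguishes $G_1$ from $G_3$, which is the multiplicity of $ac$-pairs of each type hanging off $v$. To repair your argument you would need to redirect the zero-$b$-neighbor subcase (or at least the same-type-collision subcase) to $G_3$ rather than $G_1$. Your $k=0$ case and, modulo the bookkeeping you flagged, your $k=2$ sketch are consistent with the paper's proof.
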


\begin{proof}
We distinguish a few cases depending on the number of $x$-type vertices in $G$.

\begin{enumerate}
    \item $P$ contains no $x$-type vertices.

    By Lemma \ref{lemma:M-components}, $P$ may contain at most four $M$-components, and hence it can intersect with at most five gadgets. If $i$ is the number of intersected $H_0$-gadgets, then $P$ can be realized in $G_{0,i}$.

    \item $P$ contains exactly one $x$-type vertex.

    Let $v$ be the only $x$-type vertex in $P$. Observe that every $b$-type vertex in $P$ must be a neighbor of $v$. We distinguish the following subcases:
    \begin{enumerate}
        \item If $v$ has no $b$-type neighbors in $P$, then $P$ contains no $b$-type vertices at all. Moreover, it may intersect with at most six $ac$-pairs --- at most two of them are incident to $v$, and the remaining are separated in $P$ via $M$-components. Therefore, $P$ can be realized in $G_3$.
        \item If $v$ has exactly one $b$-type neighbor, then this $b$-type vertex belongs either to a $H_0$-type or a $H_1$-type gadget. Apart from that, $P$ may intersect with at most five $ac$-pairs --- at most one of them is incident to $v$, and the remaining are separated in $P$ via $M$-components. Therefore, $P$ can be realized in $G_1$.
        \item If $v$ has exactly two $b$-type neighbors, then $v$ is incident to at most two gadgets, and $P$ can intersect with at most four $ac$-pairs disjoint from those gadgets, since it may have at most four $M$-components. Therefore, $P$ can be realized in $G_2$.
    \end{enumerate}
    
    \item $P$ contains two $x$-type vertices. 

    Let $v$ and $v'$ be two $x$-type vertices of $P$ and write $P$ as $P_1vP_2v'P_3$. Observe that neither $P_1$ nor $P_3$ contain any $b$-type vertices. We consider the following subcases.
    \begin{enumerate}
        \item If $P_2$ is a single vertex, then it's either $t_2$, $t_3$, or of $b$-type. In either case, $P_1$ and $P_3$ intersect at most six $ac$-pairs --- at most two of them are adjacent to some $x$-type vertex, and the remaining are joined by $M$-components. Therefore, $P$ can be realized in $G_3$.
        \item If $P_2$ is of the form $ct_2'c'$ or $ct_3c'$ for some $c$-type vertices $c$, $c'$, then $P_1$ and $P_3$ must be empty as they cannot have any $b$-type or $c$-type vertices. In particular, $P$ can be realized in $G_3$.
        \item Otherwise, $P_2$ is of the form $caP_2'a'c'$ for some $ac$-pairs $ac$, $a'c'$, where $P_2$ is either a single vertex from $I$ or has two different endpoints in $I$. Then, $P$ can intersect at most eight $ac$-pairs --- at most four of them are incident to some $x$-type vertex, and the remaining are joined by $M$-components. Therefore, $P$ can be realized in $G_3$.\qedhere
    \end{enumerate}
\end{enumerate}
\end{proof} 

\begin{proof}[Proof of \cref{thm:main}]
  By \cref{lem:correctness}, \textsc{$4$-coloring} is \NP-hard in the class $\mathcal{G}$. By Lemmas \ref{lem:three_x_vertices}, \ref{lem:computer_check}, and \ref{lem:path_realization}, $\mathcal G$ is $P_{19}$-free and \cref{prop:triangle-free} states that the family is triangle-free, which concludes the proof.
\end{proof}

\bigskip
\noindent\textbf{Acknowledgements.}~
T.~Masařík and J.~Masaříková want to thank Andrzej Grzesik for hosting them and for the pleasant and hospitable atmosphere of Jagiellonian University.
The authors also thank Andrzej Grzesik for the valuable discussions that emerged from this work.

\bibliographystyle{alphaurl}
\bibliography{lit}

\newcommand{\etalchar}[1]{$^{#1}$}
\begin{thebibliography}{KMM{\etalchar{+}}20}

\bibitem[BCM{\etalchar{+}}17]{BCMSZ17}
Flavia Bonomo, Maria Chudnovsky, Peter Maceli, Oliver Schaudt, Maya Stein, and
  Mingxian Zhong.
\newblock {Three-Coloring and List Three-Coloring of Graphs Without Induced
  Paths on Seven Vertices}.
\newblock {\em Combinatorica}, 38(4):779--801, May 2017.
\newblock \href {https://doi.org/10.1007/s00493-017-3553-8}
  {\path{doi:10.1007/s00493-017-3553-8}}.

\bibitem[BHMP22]{mimw}
Nick Brettell, Jake Horsfield, Andrea Munaro, and Daniël Paulusma.
\newblock List k-colouring p-free graphs: A mim-width perspective.
\newblock {\em Information Processing Letters}, 173:106168, January 2022.
\newblock \href {https://doi.org/10.1016/j.ipl.2021.106168}
  {\path{doi:10.1016/j.ipl.2021.106168}}.

\bibitem[CHS24]{List-k-coloring}
Maria Chudnovsky, Sepehr Hajebi, and Sophie Spirkl.
\newblock List-$k$-coloring ${H}$-free graphs for all $k>4$.
\newblock {\em Combinatorica}, 44(5):1063--1068, May 2024.
\newblock \href {https://doi.org/10.1007/s00493-024-00106-2}
  {\path{doi:10.1007/s00493-024-00106-2}}.

\bibitem[CHSZ20]{CHSZ20}
Maria Chudnovsky, Shenwei Huang, Sophie Spirkl, and Mingxian Zhong.
\newblock {List 3-Coloring Graphs with No Induced} ${P}_6+r{P}_3$.
\newblock {\em Algorithmica}, July 2020.
\newblock \href {https://doi.org/10.1007/s00453-020-00754-y}
  {\path{doi:10.1007/s00453-020-00754-y}}.

\bibitem[CSZ24a]{SCZp1}
Maria Chudnovsky, Sophie Spirkl, and Mingxian Zhong.
\newblock Four-coloring ${P}_6$-free graphs. {I}. extending an excellent
  precoloring.
\newblock {\em SIAM Journal on Computing}, 53(1):111--145, February 2024.
\newblock \href {https://doi.org/10.1137/18m1234837}
  {\path{doi:10.1137/18m1234837}}.

\bibitem[CSZ24b]{SCZp2}
Maria Chudnovsky, Sophie Spirkl, and Mingxian Zhong.
\newblock Four-coloring ${P}_6$free graphs. {II}. finding an excellent
  precoloring.
\newblock {\em SIAM Journal on Computing}, 53(1):146--187, February 2024.
\newblock \href {https://doi.org/10.1137/18m1234849}
  {\path{doi:10.1137/18m1234849}}.

\bibitem[DJP19]{cw-survey}
Konrad~K. Dabrowski, Matthew Johnson, and Daniël Paulusma.
\newblock {\em Clique-width for hereditary graph classes}, pages 1--56.
\newblock Cambridge University Press, June 2019.
\newblock \href {https://doi.org/10.1017/9781108649094.002}
  {\path{doi:10.1017/9781108649094.002}}.

\bibitem[DMN{\etalchar{+}}23]{atoms}
Konrad~K. Dabrowski, Tomáš Masařík, Jana Novotná, Daniël Paulusma, and
  Paweł Rzążewski.
\newblock Clique-width: Harnessing the power of atoms.
\newblock {\em Journal of Graph Theory}, 104(4):769--810, July 2023.
\newblock \href {https://doi.org/10.1002/jgt.23000}
  {\path{doi:10.1002/jgt.23000}}.

\bibitem[EWHK98]{EHK98}
Thomas Emden-Weinert, Stefan Hougardy, and Bernd Kreuter.
\newblock Uniquely colourable graphs and the hardness of colouring graphs of
  large girth.
\newblock {\em Combinatorics Probability and Computing}, 7(4):375--386, 1998.
\newblock \href {https://doi.org/10.1017/S0963548398003678}
  {\path{doi:10.1017/S0963548398003678}}.

\bibitem[GJPS16]{GJPS16}
Petr~A. Golovach, Matthew Johnson, Daniël Paulusma, and Jian Song.
\newblock A survey on the computational complexity of coloring graphs with
  forbidden subgraphs.
\newblock {\em Journal of Graph Theory}, 84(4):331--363, March 2016.
\newblock \href {https://doi.org/10.1002/jgt.22028}
  {\path{doi:10.1002/jgt.22028}}.

\bibitem[GL20]{GL20}
Peter Gartland and Daniel Lokshtanov.
\newblock Independent set on ${P}_{k}$-free graphs in quasi-polynomial time.
\newblock In {\em 61st {IEEE} Annual Symposium on Foundations of Computer
  Science, {FOCS} 2020, Durham, NC, USA, November 16-19, 2020}, pages 613--624.
  {IEEE}, 2020.
\newblock \href {https://doi.org/10.1109/FOCS46700.2020.00063}
  {\path{doi:10.1109/FOCS46700.2020.00063}}.

\bibitem[GPS11]{GPS11}
Petr~A. Golovach, Daniël Paulusma, and Jian Song.
\newblock {\em Coloring Graphs without Short Cycles and Long Induced Paths},
  pages 193--204.
\newblock Springer Berlin Heidelberg, 2011.
\newblock \href {https://doi.org/10.1007/978-3-642-22953-4_17}
  {\path{doi:10.1007/978-3-642-22953-4_17}}.

\bibitem[GPS14a]{GPS14_IC}
Petr~A. Golovach, Daniël Paulusma, and Jian Song.
\newblock Closing complexity gaps for coloring problems on {H}-free graphs.
\newblock {\em Information and Computation}, 237:204--214, October 2014.
\newblock \href {https://doi.org/10.1016/j.ic.2014.02.004}
  {\path{doi:10.1016/j.ic.2014.02.004}}.

\bibitem[GPS14b]{GPS14_DAM}
Petr~A. Golovach, Daniël Paulusma, and Jian Song.
\newblock Coloring graphs without short cycles and long induced paths.
\newblock {\em Discrete Applied Mathematics}, 167:107--120, April 2014.
\newblock \href {https://doi.org/10.1016/j.dam.2013.12.008}
  {\path{doi:10.1016/j.dam.2013.12.008}}.

\bibitem[HH17]{HellHuang17}
Pavol Hell and Shenwei Huang.
\newblock Complexity of coloring graphs without paths and cycles.
\newblock {\em Discrete Applied Mathematics}, 216:211--232, January 2017.
\newblock \href {https://doi.org/10.1016/j.dam.2015.10.024}
  {\path{doi:10.1016/j.dam.2015.10.024}}.

\bibitem[HJP15]{HJP15}
Shenwei Huang, Matthew Johnson, and Daniël Paulusma.
\newblock Narrowing the complexity gap for colouring $({C}_s, {P}_t)$-free
  graphs.
\newblock {\em The Computer Journal}, 58(11):3074--3088, June 2015.
\newblock \href {https://doi.org/10.1093/comjnl/bxv039}
  {\path{doi:10.1093/comjnl/bxv039}}.

\bibitem[Hol81]{Holyer81}
Ian Holyer.
\newblock The {NP}-completeness of edge-coloring.
\newblock {\em {SIAM} Journal on Computing}, 10(4):718--720, November 1981.
\newblock \href {https://doi.org/10.1137/0210055} {\path{doi:10.1137/0210055}}.

\bibitem[Hua16]{Huang16}
Shenwei Huang.
\newblock Improved complexity results on k-coloring ${P}_t$-free graphs.
\newblock {\em European Journal of Combinatorics}, 51:336--346, January 2016.
\newblock \href {https://doi.org/10.1016/j.ejc.2015.06.005}
  {\path{doi:10.1016/j.ejc.2015.06.005}}.

\bibitem[JKM{\etalchar{+}}21]{2P4WG}
V{\'i}t Jel{\'i}nek, Tereza Klimo{\v{s}}ov{\'a}, Tom{\'a}{\v{s}}
  Masa{\v{r}}{\'i}k, Jana Novotn{\'a}, and Aneta Pokorn{\'a}.
\newblock On 3-coloring of $(2{P}_4,{C}_5)$-free graphs.
\newblock In {\em Graph-Theoretic Concepts in Computer Science ({WG})}, pages
  388--401. Springer International Publishing, 2021.
\newblock \href {https://doi.org/10.1007/978-3-030-86838-3_30}
  {\path{doi:10.1007/978-3-030-86838-3_30}}.

\bibitem[Kar72]{Karp72}
Richard~M. Karp.
\newblock {\em Reducibility among Combinatorial Problems}, pages 85--103.
\newblock Springer US, Boston, MA, 1972.
\newblock \href {https://doi.org/10.1007/978-1-4684-2001-2\_9}
  {\path{doi:10.1007/978-1-4684-2001-2\_9}}.

\bibitem[KMM{\etalchar{+}}20]{KMMNPS20}
Tereza Klimo{\v{s}}ov{\'{a}}, Josef Mal{\'{\i}}k, Tom{\'{a}}{\v{s}}
  Masa{\v{r}}{\'{\i}}k, Jana Novotn{\'{a}}, Daniël Paulusma, and Veronika
  Sl{\'{\i}}vov{\'{a}}.
\newblock Colouring (${P}_r+{P}_s$)-{Free Graphs}.
\newblock {\em Algorithmica}, 82(7):1833--1858, January 2020.
\newblock \href {https://doi.org/10.1007/s00453-020-00675-w}
  {\path{doi:10.1007/s00453-020-00675-w}}.

\bibitem[LG83]{LevenGalil83}
Daniel Leven and Zvi Galil.
\newblock {NP} completeness of finding the chromatic index of regular graphs.
\newblock {\em Journal of Algorithms}, 4(1):35--44, March 1983.
\newblock \href {https://doi.org/10.1016/0196-6774(83)90032-9}
  {\path{doi:10.1016/0196-6774(83)90032-9}}.

\bibitem[Myc55]{Mycielski}
Jan Mycielski.
\newblock Sur le coloriage des graphes.
\newblock {\em Colloquium Mathematicum}, 3(2):161--162, 1955.
\newblock \href {https://doi.org/10.4064/cm-3-2-161-162}
  {\path{doi:10.4064/cm-3-2-161-162}}.

\bibitem[PPR21]{PPR20}
Marcin Pilipczuk, Micha{\l} Pilipczuk, and Pawe{\l} Rz{\k{a}}{\.{z}}ewski.
\newblock Quasi-polynomial-time algorithm for independent set in ${P}_t$-free
  graphs via shrinking the space of induced paths.
\newblock In {\em Symposium on Simplicity in Algorithms ({SOSA})}, pages
  204--209. Society for Industrial and Applied Mathematics, January 2021.
\newblock \href {https://doi.org/10.1137/1.9781611976496.23}
  {\path{doi:10.1137/1.9781611976496.23}}.

\bibitem[Sch78]{Schaefer78}
Thomas~J. Schaefer.
\newblock The complexity of satisfiability problems.
\newblock In {\em Proceedings of the 10th Annual ACM Symposium on Theory of
  Computing (STOC '78)}, pages 216--226, San Diego, CA, USA, 1978. ACM.
\newblock \href {https://doi.org/10.1145/800133.804350}
  {\path{doi:10.1145/800133.804350}}.

\end{thebibliography}

\end{document}